\documentclass[12pt]{article}
\usepackage{Tillpack}

\addtolength{\textheight}{30mm}
\addtolength{\topmargin}{-10mm}
\addtolength{\textwidth}{15mm}

\title{Trees in simple Polygons}
\author{ Tillmann Miltzow\footnote{Institute of Computer Science, Freie Universit\"at Berlin, Germany. \texttt{t.miltzow@gmail.com}}} 


\date{}

\begin{document}
\maketitle

\begin{abstract} 
	We prove that every simple polygon contains a degree $3$ tree encompassing a prescribed set of vertices.
	We give tight bounds on the minimal number of degree $3$ vertices. We apply this result to reprove a result from 
	Bose~\emph{et al.}~\cite{DBLP:journals/dcg/BoseHT01} that every set of disjoint line segments in the plane admits a binary tree.
\end{abstract}

\paragraph{Introduction}
Recently many papers have been published regarding the augmentation of discrete geometric objects, in particular a set of non-crossing segments in the plane. While these problems are studied, not just many open problems could be solved but also new tools developed. Among these tools is the simple and easy to prove lemma \emph{about matchings in polygons} by Manuel Abellanas, Alfredo Garc\'{\i}a Olaverri, Ferran Hurtado, Javier Tejel and Jorge Urrutia~\cite{DBLP:journals/comgeo/AbellanasOHTU08}. It gives sufficient conditions when a predefined set of vertices of a polygon can be \emph{geometrically} matched \emph{within} this polygon. 
It was recently used to show that the bichromatic compatible matching graph is connected~\cite{2012arXiv1207.2375A}.

The main motivation of this note is to find a useful variation of this auspicious lemma about matchings in polygons. Trees with low maximal degree seemed curious. We show the existence of a max degree $3$ tree spanning a prescribed set of vertices \emph{in} a given polygon(to be made precise soon). We give explicit tight bounds on the number of degree $3$ vertices in the attained tree. To the best of our knowledge this question has not been studied before.

As an application we give a new easy proof to an old result from Prosenjit Bose, Michael E. Houle and Godfried T. Toussaint~\cite{DBLP:journals/dcg/BoseHT01} that every set of non-crossing line segments in the plane admits a binary encompassing geometric tree.
Later Michael Hoffmann, Bettina Speckmann and Csaba D. T\'{o}th gave an alternative proof~\cite{Hoffmann201035}. A comparison of the their proof and ours reveals similarities, which will be explained later. 

It is important to note that we assume general position throughout this note in the sense that no three segment endpoints are on a common line. Theorem \ref{thm:main} also holds without this assumption.

\paragraph{Main Result}
A PSLG (planar straight line graph) is a planar non-crossing embedding of a graph with straight edges. 
A simple polygon is a PSLG which is a cycle. We say that a PSLG $G$ is \emph{inscribed in} a simple polygon $P$ if  $G$ is contained in the closed bounded region of $P$ and for every edge $e$ of $G$ either: $e$ is in the interior of $P$ or
 $e$ is also an edge of $P$.

We say a vertex of a simple polygon is \emph{reflex} if the interior angle to its incident edges is larger than $\pi$ and \emph{convex} otherwise. 
	\begin{figure}[h]
			\begin{center}
			\includegraphics[width = 0.4\textwidth]{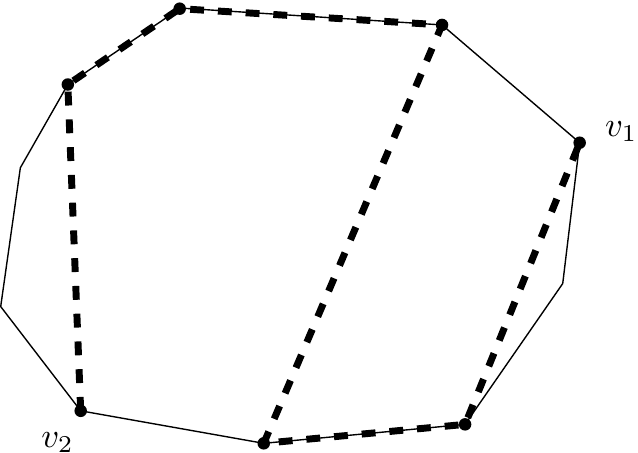}
				\caption{The Theorem is trivial in the convex case.}
				\label{fig:convexCase}
			\end{center}
	\end{figure} 
\begin{theorem}\label{thm:main}
	Let $P$ be a simple polygon, $R$ the set of reflex vertices of $P$ and $A\supseteq R$ some superset of vertices of $P$. Further we can specify two vertices $v_1,v_2 \in A\backslash R$ such that the following holds:
	There exists a PSLG $T$ with vertex set $A$ inscribed in $P$, which is a tree. The tree $T$ satisfies the following degree restrictions:
$$
\begin{array}{rl}
	 deg(v) \leq  3 & \ \ \forall \ v\in R  \\
	 deg(v)  \leq  2 & \ \ \forall \ v\in A\backslash R  \\
	 deg(v)  =  1 & \ \ \forall \ v\in \{ v_1, v_2 \}.  \\
	\end{array}
	$$
	For every natural number $n$ exists a polygon with $n$ reflex vertices
	such that any tree with the properties above has $n$ reflex vertices.
\end{theorem}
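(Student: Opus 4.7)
I would prove the existence part by induction on $|R|$. In the base case $|R|=0$ the polygon $P$ is convex; every chord lies inside $P$, so I directly construct a non-crossing Hamiltonian path on $A$ from $v_1$ to $v_2$ by splitting the boundary of $P$ at $v_1,v_2$ into two arcs, listing the $A$-vertices along each arc in polygon order as $a_1,\dots,a_s$ and $b_1,\dots,b_t$, and taking the path $v_1,b_1,\dots,b_t,a_s,\dots,a_1,v_2$. Its edges are chords of a convex polygon and a short check shows they are pairwise non-crossing; $v_1$ and $v_2$ are leaves and every other vertex has degree $2$, which satisfies the degree bounds vacuously since $R=\emptyset$.

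For the inductive step, fix $r\in R$. The interior angle at $r$ exceeds $\pi$, so the cone of directions from $r$ into the interior of $P$ spans angular width greater than $\pi$. Sweeping a ray within this cone I find a diagonal $rw$ with $w\in A$ strictly inside the cone (the key existence question; see below). This diagonal splits $P$ into subpolygons $P_1,P_2$, in each of which the interior angle at $r$ becomes strictly less than $\pi$, so $r$ is convex in both subproblems. Put $A_i = (A\cap V(P_i))\cup\{r,w\}$ and invoke the induction hypothesis on each $(P_i,A_i)$, choosing the prescribed leaves as $\{v_1,r\}$ on the side containing $v_1$ and $\{v_2,w\}$ on the side containing $v_2$ (when $w\in R$; otherwise $w$ is used as a prescribed leaf in both subproblems, at the cost of more care in placing $v_1,v_2$). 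Gluing the two subtrees along their common vertices $r$ and $w$ yields a tree $T$ whose degrees at $r$ and $w$ are at most $1+2=3$ and $2+1=3$ respectively, while all other vertices inherit their bound from the subpolygon they live in.

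For the tight lower bound I would construct, for each $n$, a polygon with $n$ reflex vertices in which every admissible tree forces all $n$ reflex vertices to have degree exactly $3$. The polygon is built from $n$ identical gadgets chained in series; each gadget contributes one reflex vertex $r$ whose removal in $P$ disconnects three small subregions, each containing a vertex of $A$ that must be reached by $T$. Since the unique path in $T$ from each such $A$-vertex to the rest of $T$ passes through $r$, the vertex $r$ has at least three distinct tree-neighbours, and combined with the upper bound this forces $\deg_T(r)=3$ for all $n$ reflex vertices. Iterating the gadget $n$ times gives the desired example.

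The principal difficulty is the inductive step: guaranteeing a diagonal $rw$ with $w\in A$ strictly inside the reflex cone at $r$, and simultaneously harmonising the leaf prescriptions in the two subproblems with the global $v_1,v_2$. For the existence of $w$, the cone has width greater than $\pi$, so extending one of the edges incident to $r$ through $r$ yields a half-line in the interior that hits an edge of $P$; perturbing or sweeping from this direction exhibits at least one vertex of $P$ visible from $r$ in the interior of the cone, and I would argue that it can be taken to be in $A$ (with the fallback of temporarily enlarging $A$ by $w$ and absorbing $w$ into a collinear edge of $T$ after recursion when $\deg_T(w)=2$). The leaf-prescription part needs a small case analysis on the side of $rw$ containing each of $v_1,v_2$ and on whether $w\in R$ or $w\in A\setminus R$; this combinatorial bookkeeping, not the geometry, is the main obstacle to a clean writeup.
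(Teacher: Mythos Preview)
Your base case matches the paper's, but the inductive step has a genuine gap. You split $P$ along a diagonal $rw$ to a \emph{vertex} $w\in A$, so the two subpolygons share \emph{two} vertices, $r$ and $w$. Each inductive subtree $T_i$ is a tree on $A_i$ and therefore contains a (unique) $r$--$w$ path; the union $T_1\cup T_2$ then contains a cycle and is not a tree, unless both $T_1$ and $T_2$ happen to use the boundary edge $rw$ itself---something your induction hypothesis does not force. Your degree count ``$1+2=3$'' at $r$ and ``$2+1=3$'' at $w$ already hints at the problem: summing vertices and edges gives $|A_1|+|A_2|-2$ vertices but $(|A_1|-1)+(|A_2|-1)=|A_1|+|A_2|-2$ edges, one too many for a tree. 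The paper sidesteps this entirely by shooting a ray from $r$ to a \emph{non-vertex} boundary point $q$; then the two subpolygons share only $r$ (the auxiliary point $q$ is never placed in $A$), and gluing two trees at a single common vertex does yield a tree. A second, smaller issue: you do not handle the case where $v_1$ and $v_2$ land on the same side of the cut, which the paper treats separately (specifying $r$ and an arbitrary convex vertex as leaves on the other side, which is exactly what makes $\deg(r)\le 2+1=3$). Your fallback of ``temporarily enlarging $A$ by $w$ and absorbing $w$'' does not repair the cycle problem either.

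For the tightness example your gadget sketch is plausible but vague; the paper gives an explicit construction---a regular $(2n{+}2)$-gon with $n$ narrow spikes---and argues by visibility and a degree-sum count that every admissible tree has all $n$ reflex vertices of degree exactly~$3$. If you want to salvage your inductive approach, the cleanest fix is precisely the paper's: cut to a point in the interior of a boundary edge rather than to a vertex.
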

\begin{proof}
	The proof goes by induction on the number of reflex vertices of $P$.
	For the induction basis assume $P$ is convex. 
	We aim to construct a path from $v_1$ to $v_2$ traversing
	the vertices of $A$. Consider the only two paths $Q_1$ and $Q_2$ on $P$ connecting 
	$v_1$ and $v_2$. One possible path from 
	$v_1$ to $v_2$ traverses at first all vertices of $A$ on $Q_1$ in the order 
	they appear on $Q_1$ and traverses thereafter the remaining vertices on $Q_2$,
	before it reaches $v_2$.
	Clearly, $deg(v_1) = deg(v_2) = 1$ and all other vertices have degree $2$, see Figure \ref{fig:convexCase}. 	It is an instructive exercise to count the number of such paths in a given convex polygon.
	\begin{figure}[h]
			\begin{center}
			\includegraphics[width = \textwidth]{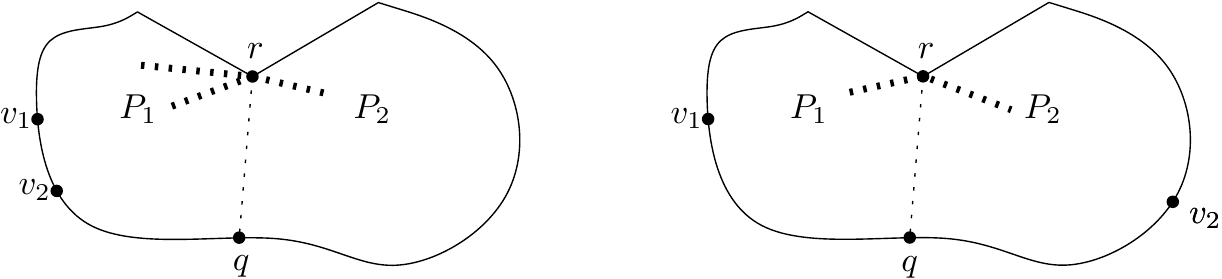}
				\caption{Either both marked vertices are in $P_1$ or in $P_2$.}
				\label{fig:inductionStep}
			\end{center}
	\end{figure} 
	
	For the induction step assume that $P$ has some reflex vertices and let $r$ be such a reflex vertex of $P$, as in Figure \ref{fig:inductionStep}. We shoot a ray from $r$ into the interior of $P$ in such a way that both angles at $r$ become convex. We denote by $q$ the first point on the boundary of $P$  hit by the ray. We further assume that $q$ is not a vertex of $P$. The segment $rq$ splits $P$ into two simple polygons $P_1$ and $P_2$. The vertices $r$ and $q$ are convex in $P_1$ and $P_2$. Both $P_1$ and $P_2$ have fewer reflex vertices than $P$. We denote the vertices of $A$ in $P_i$ with $A_i$. We list the vertex $r$ in $A_1$ and $A_2$. We distinguish two cases. 
	
	\textbf{Case 1} Both $v_1$ and $v_2$ specified in the Lemma are in the same polygon, say $P_1$. Then we apply the induction hypothesis on $P_1$ with the $A_1$ and $v_1$ and $v_2$ as specified vertices. We receive a PSLG $T_1$ with the above properties. In particular $r$ has degree $\leq 2$ in $T_1$. We also apply the induction hypothesis on $P_2$ and $A_2$ with $r$ and any other vertex of $A_2\backslash R$ specified. We receive a second tree $T_2$ with the above properties and $r$ has degree $1$ in $P_2$. The union $T = T_1 \cup T_2$ has all the above properties and in particular $r$ has degree at most $3$.
	
	\textbf{Case 2} The specified vertices $v_1$ and $v_2$ are in different polygons $P_1$ and $P_2$ respectively. We apply the induction hypothesis on $P_1$ with $A_1$ and  $v_1$ and $r$ as specified vertices. We receive a PSLG $T_1$ with all the above properties and the degree of $r$ in $T_1$ is exactly one. Similarly there exists a PSLG $T_2$ in $P_2$ such that $r$ has degree $1$ in $T_2$ the union $T = T_1 \cup T_2$  satisfies all the properties in the Lemma and the degree of $r$ is exactly $2$.
				\begin{figure}[h]
			\begin{center}
			\includegraphics[width = \textwidth]{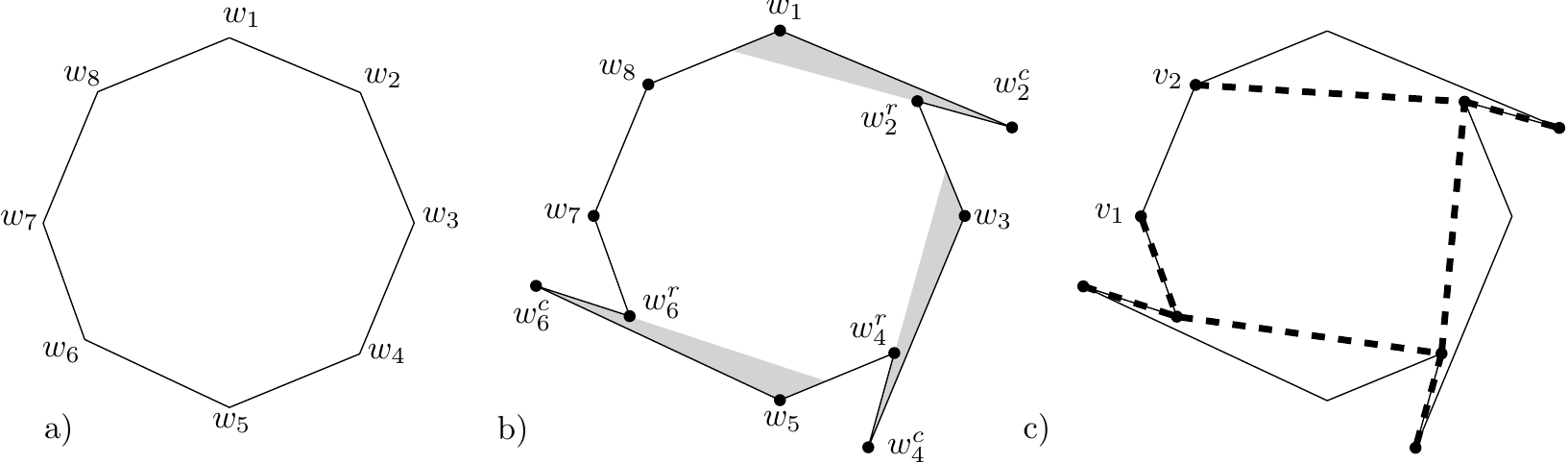}
				\caption{a) Is the drawing of a regular polygon on $8$ vertices. b) Is the transformed vertices with $3$ spikes added. The shaded region indicates the visibility region of the convex vertex of the spikes. c) The only tree that satisfies the conditions of Theorem \ref{thm:main} is drawn with fat dashed lines. All reflex vertices must have degree $3$.}
				\label{fig:tightExample}
			\end{center}
	\end{figure} 
	It remains to give an example of a polygon with $n$ reflex vertices such that at any reflex vertex we must have degree $3$. We start the construction with a regular polygon on $2n+2$ vertices. We add a spike to the vertices $w_2, w_4, w_6, \ldots ,w_{2n}$, as shown in Figure~\ref{fig:tightExample}~b).
	This creates new reflex vertices $w^r_2, w^r_4, \ldots ,w^r_{2n}$ and new convex vertices $w^c_2, w^c_4, \ldots ,w^c_{2n}$ and destroys the corresponding original vertices of the polygon. For any new convex vertex $w_k^c$ only the vertices $w_k^r$ and $w_{k-1}$ are visible, by construction. We define $A$ formally as $$ A = \{ w^r_2,  \ldots ,w^r_{2n} \} \cup \{  w^c_2,  \ldots ,w^c_{2n} \} \cup \{w_{2n+1},w_{2n+2}\}.$$
	 We choose  $w_{2n+1}$ and $w_{2n+2}$ as the marked vertices $v_1$ and $v_2$. It is clear, that in any encompassing PSLG as in Theorem \ref{thm:main} the vertices at the spikes must have degree $1$ as well as $v_1$ and $v_2$. Since the sum of the degrees is constant, exactly the reflex vertices must have degree $3$.
\end{proof}

The just proven Theorem needed the marked vertices $v_1$ and $v_2$ mainly for the induction to work. One could formulate the Theorem without mentioning them and only consider them in the proof, because they strengthen the result only slightly. 
To do so consider the case that $A$ contains at least $2$ convex vertices and mark them as $v_1$ and $v_2$. In the other case consider
vertices with smallest and largest $x$-coordinate.  They clearly are convex. Mark them as $v_1$ and $v_2$. They can be remove later from the tree, as they are leaves.

\paragraph{Application}In this paragraph we want to prove the existence of a binary tree spanning a set of disjoint line segments in the plane.
It was proven first by Bose~\emph{et al.}~\cite{DBLP:journals/dcg/BoseHT01}. Later Hoffmann~\emph{et al.} gave a simpler alternative proof~\cite{Hoffmann201035}. We will give a third proof, as an application of Theorem \ref{thm:main}. Our proof resembles the proof in \cite{Hoffmann201035}. Roughly speaking Hoffmann~\emph{et al.} use a convex subdivision and the \emph{tunnel graph} to assign the segment endpoints to neighboring cells in a clever way. We only construct one polygon, apply Theorem \ref{thm:main} and delete some superfluous edges. Implicitly we also subdivide the plane into convex regions, but our assignment to the regions is more crude. That is why we have an additional clean up step in the end.

A PSLG is an \emph{encompassing tree} of a set $S$ of line segments in the plane if it is a tree as a graph and every line segment of $S$ is an edge of the tree.
\begin{theorem}[\cite{DBLP:journals/dcg/BoseHT01}]
	Let $S$ be a set of $n$ disjoint line segments in the plane with no $3$ segment endpoints on a common line. Then there exists 
a max degree-$3$ planar encompassing tree of $S$. 
\end{theorem}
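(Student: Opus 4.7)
The plan is to reduce the problem to Theorem \ref{thm:main} by enclosing $S$ in a single simple polygon $P$ that already contains every segment as a boundary edge. Concretely, I would surround $S$ by a large bounding rectangle and take a constrained triangulation $\mathcal{T}$ of the $2n$ endpoints and the four box corners, using every $s\in S$ as a mandatory edge and no Steiner points. In the dual graph whose nodes are the bounded triangles of $\mathcal{T}$ and whose edges join two triangles that share a non-segment edge, pick a spanning tree $F$ (the dual is connected because disjoint segments in general position do not enclose any region on their own). Removing from $\mathcal{T}$ precisely the non-segment edges dual to $F$ merges all triangles into a single simply connected region, whose boundary is a simple polygon $P$. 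Its vertex set is the $2n$ endpoints together with the box corners, and every segment of $S$ appears as an edge of $P$.

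Next I apply Theorem \ref{thm:main} to $P$ with $A$ equal to the $2n$ segment endpoints and $v_1,v_2$ chosen as the leftmost and rightmost endpoints of $S$; these lie on the convex hull of $P$ and are therefore in $A\setminus R$. The theorem returns a tree $T$ inscribed in $P$, spanning exactly the endpoints of $S$, with maximum degree $3$.

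Not every segment need be an edge of $T$, so a cleanup phase is required. Since $S$ is a matching it contains no cycle on its own, hence every cycle of the graph $G=T\cup S$ uses at least one non-segment edge of $T$. I would iteratively break each fundamental cycle created by a missing segment by deleting a non-segment edge, ending with a spanning tree $T^{\star}$ that contains all of $S$. The delicate point, and the main obstacle, is preserving the degree bound: if $u$ is an endpoint with $\deg_T(u)=3$ and the segment $s_u$ incident to $u$ is absent from $T$, then reinserting $s_u$ forces a simultaneous deletion of a tree edge at $u$. I plan to handle this by always removing, along the fundamental cycle of the inserted segment, a non-segment tree edge incident to a currently saturated endpoint, and by processing the missing segments in an order dictated by the structure of $T$ so that each local swap does not create new violations elsewhere. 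Verifying that this greedy rule can be executed consistently for the whole of $S$ is the combinatorial core of the argument; once it is settled, $T^{\star}$ is a binary encompassing tree of $S$ in the sense of the theorem.
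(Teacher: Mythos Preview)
Your polygon construction does not produce a simple polygon. When you delete the non-segment diagonals dual to a spanning tree $F$ of the bounded dual, all $4n+2$ triangles merge into a single bounded face, but the boundary of that face is not a simple closed curve. Count edges: the box contributes $4$, the segments contribute $n$, and the surviving non-segment interior diagonals contribute $(5n+1)-(4n+1)=n$, for $2n+4$ edges on $2n+4$ vertices. Each of the $2n$ interior survivors has the merged face on \emph{both} sides, hence is a bridge; what remains is the $4$-cycle of the box with trees (containing all of $S$) hanging into the interior. Concretely, take one segment $pq$ inside a box $ABCD$ and any constrained triangulation: the restricted dual is a $6$-cycle, and whichever dual edge you drop to form $F$, the surviving graph is the box plus a two-edge pendant path reaching $p$ and $q$, never a hexagon. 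So the hypothesis ``whose boundary is a simple polygon $P$'' fails, and Theorem~\ref{thm:main} cannot be invoked.

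One could try to rescue the idea by taking the length-$(4n+4)$ boundary walk of the merged face and perturbing it into a nearby simple polygon, but then every interior edge splits into two sides and every segment endpoint splits into two polygon vertices, and all of $A$, $R$, and the degree bookkeeping must be redone. This is essentially what the paper actually does, though via a different construction: it extends the segments one at a time into a convex subdivision, passes to a nearby simple polygon $Q$, and applies Theorem~\ref{thm:main} there. A feature of that route which your approach would not deliver even if $P$ existed is that each segment has \emph{exactly one} reflex endpoint in $Q$. That is precisely what makes the cleanup step work: when a missing segment is inserted, at most one of its endpoints can already have degree $3$, and one removes the cycle edge incident to that endpoint. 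You correctly flag the swap phase as the crux but leave it open; without the one-reflex-endpoint-per-segment guarantee, both endpoints of a missing segment could be saturated simultaneously, and the greedy rule you sketch has no edge to delete that repairs both at once.
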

\begin{proof}
	The idea of the proof is to construct a simple polygon out of the set of line segments, apply Theorem \ref{thm:main} and make some minor changes in order to  attain the desired tree, see Figure~\ref{fig:overview}.
	\begin{figure}[h]
			\begin{center}
			\includegraphics[width = \textwidth]{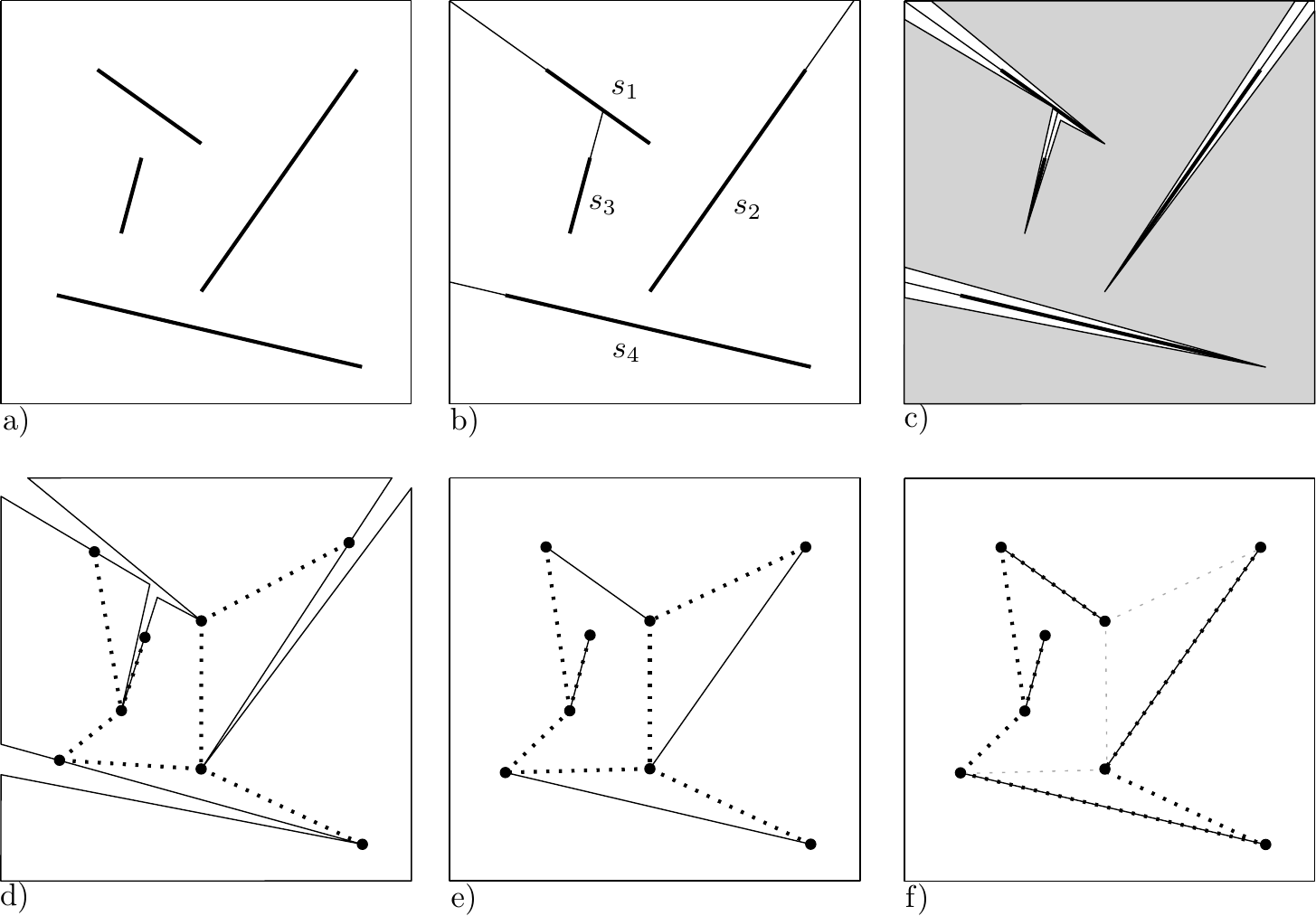}
				\caption{a) Line segments are surrounded by a bounding box. b) The polygon $P_n$ is attained after the segment extensions, note that $s_3$ can only be extended after $s_1$. c) The simplified polygon is colored grey. d) The tree attained by Theorem \ref{thm:main} is depicted. 
				e) The corresponding tree of the line segments.
				 f) Adding the original line segments and removing superfluous edges in grey results in the desired binary tree.}
				\label{fig:overview}
			\end{center}
	\end{figure} 
	The first step is to draw a bounding box $P_0$ around $S$. Next we pick a segment $s_1$ of $S_1 = S$ which has a point on the boundary of the convex hull of $S_1$. We extend $s_1$ till it reaches the bounding box and becomes part of the new bounding polygon $P_1$ itself.
	We continue iteratively. The set $S_i$ is defined to be $S_{i-1} \backslash \{s_{i-1}\}$ and $s_i \in S_i$ is a segment which has a point on the boundary of the convex hull of $S_i$. We extend $s_i$ till it becomes part of the new bounding polygon $P_i$ itself. See Figure~\ref{fig:overview}~b) for an illustration.
Clearly $P_n$ is a polygon, though not simple. In the next step depicted in Figure~\ref{fig:overview}~c), we draw a simple polygon $Q$ in the interior of $P_n$. We want that the set of reflex vertices of $Q$ and $P_n$ agree and that $Q$ is very close to $P_n$ in the Hausdorff sense. As this is very intuitive we hope that the reader will be satisfied to know that
this simplification step was used also for instance in \cite{2012arXiv1207.2375A} lemma $1$ and we refer to them for a rigorous proof.
We want to apply Theorem \ref{thm:main} on $Q$.
Note that each line segment has exactly one reflex vertex in $Q$ and these are the only reflex vertices of $Q$. We define $A$ as the set of reflex vertices of $Q$ and for each extended endpoint of a line segment we choose a closest point on $Q$ to belong to $A$, see Figure~\ref{fig:overview}~d). Thus $A$ corresponds to the segment endpoints. The vertices $v_1$ and $v_2$ can be chosen arbitrarily.
Now Theorem \ref{thm:main} grants us the existence of a tree $T_Q$ in $Q$. We define $T_0$ to be the graph which we attain when we connect the endpoints of the line segments whenever the corresponding points in $A$ where connected by an edge in $T_Q$. To see that this is possible consider any $2$ segment endpoints $a_1, a_2$ and the corresponding points in $A$ let us denote them by $b_1$ and $b_2$ respectively(not necessarily of the same segment). Observe that if $Q$ and $P_n$ are close enough that $a_1$ sees $a_2$ iff $b_1$ sees $b_2$ (Remember that we assume general position).
See Figure~\ref{fig:overview}~e). 
At last we add the line segments successively to $T_0$. The tree $T_{i}$ is defined to be $T_{i-1}$ whenever the segment $s_i$ is already an edge of $T_i$. Otherwise $T_{i}$ is attained from $T_{i-1}$ by adding segment $s_i$ and removing one of the adjacent edges, which closes a cycle with $s_i$. We choose to remove the edge which is incident to the vertex with the highest degree. Figure~\ref{fig:overview}~f) shows $T_n$.
It is clear that $T_{n}$ is an encompassing PSLG which is a tree as a graph and contains all the given line segments. It remains to show that it has max degree $3$. 
Let us consider a line segment $s_i$. Clearly, if $s_i$ already belonged to $T_{i-1}$ the endpoints have max degree $3$. Otherwise we note that one of the endpoints $u$ has max degree $2$ and the other $v$ has max degree $3$ in $T_{i-1}$. When we add $s_i$ both $u$ and $v$ have max degree $3$ and $4$ respectively and we close some cycle, which contains an edge incident to $u$ and another incident to $v$. If $v$ had indeed degree $4$ we would have remove that edge and $v$ has degree $3$. Thus both have max degree $3$ in $T_i$ and henceforth max degree $3$ in $T_n$.
\end{proof}

\paragraph{Acknowledgments}
The author wants to thank Andrei Asinowski and G\"{u}nter Rote for their eagerness to listen and discuss important and non-important issues~\smiley. Special thanks goes to Matthias Henze for proofreading.

\bibliography{TillLib}

\begin{thebibliography}{1}

\bibitem{DBLP:journals/comgeo/AbellanasOHTU08}
Manuel Abellanas, Alfredo~Garc\'{\i}a Olaverri, Ferran Hurtado, Javier Tejel,
  and Jorge Urrutia.
\newblock Augmenting the connectivity of geometric graphs.
\newblock {\em Comput. Geom.}, 40(3):220--230, 2008.

\bibitem{2012arXiv1207.2375A}
G.~{Aloupis}, L.~{Barba}, S.~{Langerman}, and D.~L. {Souvaine}.
\newblock Bichromatic compatible matchings.
\newblock {\em ArXiv e-prints 1207.2375}, July 2012.

\bibitem{DBLP:journals/dcg/BoseHT01}
Prosenjit Bose, Michael~E. Houle, and Godfried~T. Toussaint.
\newblock Every set of disjoint line segments admits a binary tree.
\newblock {\em Discrete {\&} Computational Geometry}, 26(3):387--410, 2001.

\bibitem{Hoffmann201035}
Michael Hoffmann, Bettina Speckmann, and Csaba~D. T\'{o}th.
\newblock Pointed binary encompassing trees: Simple and optimal.
\newblock {\em Computational Geometry}, 43(1):35 -- 41, 2010.
\newblock Special Issue on the 14th Annual Fall Workshop.

\end{thebibliography}
\bibliographystyle{plain}

\end{document}